\documentclass[sigconf,natbib=true]{acmart}

\usepackage{textcomp}
\usepackage{stfloats}
\usepackage{url}
\usepackage{verbatim}
\usepackage{amsthm}
\usepackage{subfigure}
\usepackage{makecell}
\usepackage{multirow}
\usepackage{enumitem}
\usepackage{subcaption}

\usepackage{color, colortbl}

\definecolor{Gray}{rgb}{0.9, 0.9, 0.9}

\usepackage[capitalize,noabbrev]{cleveref}

\newtheorem{definition}{Definition}

\newtheorem{proposition}{Proposition}

\usepackage[ruled,linesnumbered]{algorithm2e}

\usepackage[inkscapelatex=false]{svg}

\copyrightyear{2026}
\acmYear{2026}
\setcopyright{cc}
\setcctype{by}
\acmConference[CIKM '26]{Proceedings of the 35th ACM International Conference on Information and Knowledge Management}{November 07--11, 2026}{Rome, Italy}
\acmBooktitle{Proceedings of the 35th ACM International Conference on Information and Knowledge Management (CIKM '26), November 07--11, 2026, Rome, Italy}
\acmDOI{10.1145/3799682.3839882}
\acmISBN{979-8-4007-2539-5/2026/11}

\begin{document}

\title[Neural Tree Collaborative Filtering]{Neural Tree Collaborative Filtering: Rethinking Graph Collaborative Filtering as Tree Collaborative Filtering with Curvature-Aware Propagation Depth}

\author{Jinfeng Xu}
\email{jinfeng@connect.hku.hk}
\affiliation{%
  \institution{The University of Hong Kong}
  \city{HongKong SAR}
  \country{China}}

\author{Zheyu Chen}
\email{zheyu.chen@bit.edu.cn}
\affiliation{%
  \institution{Beijing Institute of Technology}
  \city{Beijing}
  \country{China}}

\author{Ziyue Peng}
\email{pengziyue001@gmail.com}
\affiliation{%
  \institution{The Hong Kong University of Science and Technology}
  \city{HongKong SAR}
  \country{China}}

\author{Shuo Yang}
\email{shuoyang.ee@gmail.com}
\affiliation{%
  \institution{The University of Hong Kong}
  \city{HongKong SAR}
  \country{China}}

\author{Jinze Li}
\email{lijinze-hku@connect.hku.hk}
\affiliation{%
  \institution{The University of Hong Kong}
  \city{HongKong SAR}
  \country{China}}

\author{Wenhao Yuan}
\email{wenhao.yuan@connect.hku.hk}
\affiliation{%
  \institution{The University of Hong Kong}
  \city{HongKong SAR}
  \country{China}}

\author{Jian Chen}
\email{ccccccj03@connect.hku.hk}
\affiliation{%
  \institution{The University of Hong Kong}
  \city{HongKong SAR}
  \country{China}}

\author{Shujie Li}
\email{shujie.li@connect.hku.hk}
\affiliation{%
  \institution{The University of Hong Kong}
  \city{HongKong SAR}
  \country{China}}

\author{Edith C. H. Ngai}
\authornote{Corresponding author.}
\email{chngai@eee.hku.hk}
\affiliation{%
  \institution{The University of Hong Kong}
  \city{HongKong SAR}
  \country{China}}

\renewcommand{\shortauthors}{Jinfeng Xu et al.}

 
\begin{abstract}
Graph Collaborative Filtering (GCF) has become the dominant paradigm in modern recommender systems by modeling user--item interactions as a bipartite graph and propagating embeddings through a fixed number of message-passing layers. However, applying a uniform propagation depth to every node ignores a fundamental property of real interaction graphs: nodes differ substantially in their local connectivity, so peripheral nodes quickly suffer from over-smoothing while hub-like nodes remain under-explored beyond their immediate neighborhood. In this paper, we revisit GCF from a tree-structured perspective and propose Neural Tree Collaborative Filtering (NTCF), a framework that re-interprets each node's local neighborhood as a rooted tree and assigns a node-specific propagation depth based on a closed-form local-degree-imbalance score that serves as a discrete Ricci-curvature proxy. We provide a theoretical analysis showing that (i) NTCF strictly generalizes NGCF, degenerating to NGCF when all curvature-induced depth adjustments vanish (a lower bound on its representation power), and (ii) the curvature-aware schedule retains strictly more discriminative information at deep layers on positively-curved (peripheral) nodes than uniform-depth propagation. NTCF can achieve higher performance than most widely used GCF backbone models and can be integrated into existing advanced self-supervised models as a backbone, replacing their original backbone to achieve enhanced performance. Extensive experiments on three public datasets demonstrate the superiority of NTCF\footnote[1]{Code is available at: \href{https://github.com/Jinfeng-Xu/NTCF}{https://github.com/Jinfeng-Xu/NTCF}.}.
\end{abstract}

\begin{CCSXML}
<ccs2012>
<concept>
<concept_id>10002951.10003317.10003347.10003350</concept_id>
<concept_desc>Information systems~Recommender systems</concept_desc>
<concept_significance>500</concept_significance>
</concept>
</ccs2012>
\end{CCSXML}

\ccsdesc[500]{Information systems~Recommender systems;}

\keywords{Recommender System, Graph Collaborative Filtering, Tree-Structured Propagation, Curvature-Aware Propagation}

\maketitle

\section{Introduction}
Recommender systems are widely used to alleviate information overload on the Web~\cite{xu2025cohesion,he2017neural,xu2024aligngroup,xu2024mentor}. Collaborative Filtering (CF) addresses the recommendation problem by learning user preferences from historical interactions of similar users. Recently, Graph Collaborative Filtering (GCF)~\cite{wang2019neural,xu2026survey,chen2025squeeze,he2020lightgcn,liu2021interest,mao2021ultragcn,chen2020revisiting} has become the dominant paradigm because the user-item interaction data is naturally graph-structured. Representative models such as NGCF~\cite{wang2019neural} adopt vanilla GCN architectures with feature transformation and nonlinear activation, while LightGCN~\cite{he2020lightgcn} simplifies these components and has since served as the backbone of many follow-up GCF and self-supervised models~\cite{yu2022graph,xu2025enhancing,chen2025hypercomplex,zhang2024recdcl,lin2022improving}.

Despite its empirical success, the prevailing GCF paradigm shares a structural limitation that has received little attention: \emph{every node propagates information through exactly the same number of layers $L$}. From the perspective of each individual node, message passing can be viewed as constructing a rooted tree of depth $L$ in which the node itself is the root and its (multi-hop) neighbors form the descendants. Forcing this tree to have the same depth for all nodes is, in fact, unnatural. Real user-item interaction graphs are highly heterogeneous: peripheral nodes whose neighbors are dense local hubs see information that is already heavily mixed at one or two hops away, so deepening their tree merely re-aggregates over-smoothed signals~\cite{chen2020revisiting,zhao2020pairnorm,liu2020towards}; in contrast, locally-hub-like nodes whose immediate neighbors are themselves comparatively peripheral need \emph{deeper} exploration to reach the informative collaborative signals that lie beyond their immediate neighborhood.

This observation prompts a question that lies at the heart of GCF design: \emph{should the propagation depth be a node-level property rather than a global hyper-parameter?} If propagation can be made node-adaptive while preserving the bidirectional nature of the user-item graph, GCF can in principle be re-interpreted as an ensemble of node-rooted trees, with each tree's depth tailored to the local geometry of the graph at its root.

Motivated by this perspective, we propose \textbf{Neural Tree Collaborative Filtering (NTCF)}. NTCF re-interprets the bipartite interaction graph as a collection of rooted trees, one per node, in which message passing fills the tree from the root outwards. Crucially, the depth of each tree is no longer a global hyper-parameter but is determined by a closed-form, $O(|\mathcal{E}|)$-computable \emph{local degree-imbalance score} that serves as a discrete-Ricci-curvature proxy~\cite{ollivier2009ricci,lin2011ricci}. The score provides a principled, data-driven measure of how a node's local connectivity compares to that of its neighbors, and translates directly into a node-specific depth: negative-curvature local hubs receive a deeper tree to reach far collaborative signals, positive-curvature peripheral nodes receive a shallower tree to avoid re-aggregating already-mixed signals, and curvature-flat nodes keep the standard depth. Because this scheme reduces to standard NGCF propagation when no node receives a curvature adjustment, NTCF provably contains NGCF as a special case while admitting a broader family of node-adaptive propagation patterns.


\section{Preliminary}
\subsection{NGCF Brief}
Let $\mathcal{U}$ and $\mathcal{I}$ denote the user and item sets, respectively, and let $\mathcal{G}=(\mathcal{V},\mathcal{E})$ be the bipartite interaction graph with $\mathcal{V}=\mathcal{U}\cup\mathcal{I}$ and $\mathcal{E}\subseteq\mathcal{U}\times\mathcal{I}$. Each node $v\in\mathcal{V}$ has an embedding $\mathbf{e}_v^{(0)}\in\mathbb{R}^d$. NGCF~\cite{wang2019neural} performs $L$ layers of bipartite message passing; for any node $v\in\mathcal{V}$ with neighbor set $\mathcal{N}_v$,
\begin{equation}
\small
\mathbf{e}_v^{(l+1)} = \sigma\!\Big(\mathbf{W}_1\mathbf{e}_v^{(l)} + \!\!\sum_{u\in\mathcal{N}_v}\!\!\frac{\mathbf{W}_1\mathbf{e}_u^{(l)} + \mathbf{W}_2(\mathbf{e}_u^{(l)}\!\odot\mathbf{e}_v^{(l)})}{\sqrt{|\mathcal{N}_v||\mathcal{N}_u|}}\Big),
\label{eq:ngcf}
\end{equation}
where $\mathbf{W}_1,\mathbf{W}_2$ are trainable weights shared across users and items, $\sigma$ is a nonlinear activation, and $\odot$ is the element-wise product. Final embeddings are obtained by concatenating the layer-wise outputs, $\hat{\mathbf{e}}_v=\mathbf{e}_v^{(0)}\|\mathbf{e}_v^{(1)}\|\cdots\|\mathbf{e}_v^{(L)}$.

\subsection{Tree-Structured View of GCF}
A useful but underexplored fact about Eq.~\ref{eq:ngcf} is that, from the standpoint of any single node $v$, $L$-layer GCN propagation produces an embedding that depends \emph{only} on the rooted tree obtained by unrolling $\mathcal{G}$ around $v$ to depth $L$. We formalize this tree as follows.
\begin{definition}[Rooted Propagation Tree]
For a node $v\in\mathcal{V}$ and depth $L\in\mathbb{N}$, the rooted propagation tree $\mathcal{T}_v^{(L)}$ is defined by breadth-first unrolling of $\mathcal{G}$ from $v$ for $L$ hops. Each node in $\mathcal{T}_v^{(L)}$ corresponds to a walk of length at most $L$ originating at $v$, and edges follow the bipartite structure of $\mathcal{G}$.
\end{definition}
Under this view, GCN-style propagation can be equivalently described as: for every $v\in\mathcal{V}$, build $\mathcal{T}_v^{(L)}$ and aggregate information from leaves to root. This makes explicit the fact that NGCF and its successors implicitly assume \emph{the same tree depth $L$ for every node}---a global hyper-parameter that ignores the heterogeneous local geometry of $\mathcal{G}$. We relax this assumption by allowing each node to have its own depth $L_v$, and we use a closed-form local-curvature score (Sec.~\ref{sec:curvature}) to determine $L_v$.

\subsection{Local Degree-Imbalance Curvature on Graphs}
\label{sec:curvature}
Discrete-graph analogues of Ricci curvature, such as Ollivier-Ricci curvature~\cite{ollivier2009ricci} and the Lin-Lu-Yau~\cite{lin2011ricci} variant, characterize how a node's local connectivity compares to that of its neighbors. Exact computation of these quantities, however, requires solving an optimal-transport problem at every edge, which is impractical for industrial-scale interaction graphs containing millions of edges. We therefore adopt a simple, closed-form curvature \emph{proxy} that captures the same local-imbalance intuition while admitting an $O(|\mathcal{E}|)$ pre-computation. Given a node $v$ with degree $d(v)$, we define its local degree-imbalance score $\kappa(v)$ as:
\begin{equation}
\kappa(v) \;=\; \frac{\bar{d}_{\mathcal{N}(v)} - d(v)}{\bar{d}_{\mathcal{N}(v)} + d(v) + \epsilon},
\label{eq:curvature}
\end{equation}
where $\bar{d}_{\mathcal{N}(v)}=\frac{1}{|\mathcal{N}(v)|}\sum_{w\in\mathcal{N}(v)}d(w)$ is the average degree of $v$'s neighbors and $\epsilon>0$ avoids division by zero. The interpretation is intuitive. When $d(v)>\bar{d}_{\mathcal{N}(v)}$, the node is itself more connected than its neighbors and $\kappa(v)<0$: $v$ acts as a local hub whose neighbors are comparatively peripheral, so a deeper propagation tree can reach informative two- or three-hop signals that a shallow tree would miss. When $d(v)<\bar{d}_{\mathcal{N}(v)}$, the node sits at the periphery of a more densely connected local cluster and $\kappa(v)>0$: deepening the tree merely re-aggregates information that is already heavily mixed in the local hub, accelerating over-smoothing, so a shallower tree is preferable. When $\kappa(v)\!\approx\!0$, the node and its neighbors are comparably connected and the standard depth is appropriate. Eq.~\ref{eq:curvature} is bounded in $[-1,1]$ and is computed once at preprocessing time. For brevity, we simply refer to $\kappa(v)$ as the \emph{curvature} of $v$ in the rest of the paper.

\section{Methodology}
We present Neural Tree Collaborative Filtering (NTCF), illustrated in \cref{fig:framework}. 

\begin{figure}[t]
\centering
\includegraphics[width=1\linewidth]{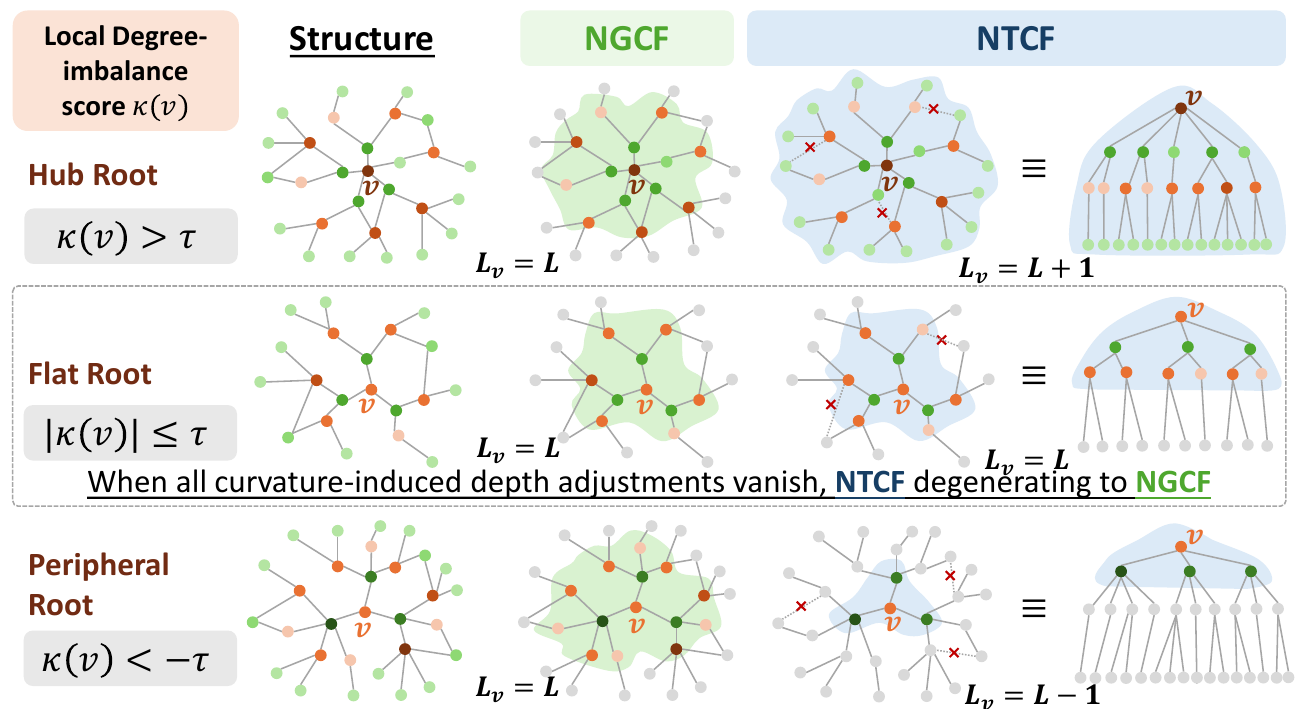}
\vskip -0.1in
\caption{Overview of NTCF.}
\label{fig:framework}
\vskip -0.15in
\end{figure}

\subsection{Curvature-Aware Propagation Depth}
\label{sec:depth}
Given a global base depth $L$ and a curvature threshold $\tau\!\geq\!0$, NTCF assigns each node $v$ an effective depth:
\begin{equation}
L_v \;=\; \max\!\big(L + \Delta(v),\,1\big), \qquad
\Delta(v) =
\begin{cases}
+1, & \kappa(v) < -\tau,\\
\;\;0, & |\kappa(v)| \leq \tau,\\
-1, & \kappa(v) > +\tau,
\end{cases}
\label{eq:depth}
\end{equation}
so that $L_v\!\in\!\{\max(L\!-\!1,1),\,L,\,L\!+\!1\}$. The threshold $\tau$ sets a curvature-flat band around zero that controls the fraction of adjusted nodes, and the clip guarantees every node at least one round of message passing. Eq.~\ref{eq:depth} is computed once at preprocessing in $O(|\mathcal{E}|)$ and cached.

\subsection{Tree-Structured Propagation}
\label{sec:propagation}
Let $\mathbf{E}^{(0)}\!\in\!\mathbb{R}^{(|\mathcal{U}|+|\mathcal{I}|)\times d}$ stack the layer-0 embeddings of all nodes, and let $L_{\max}=\max_v L_v$. NTCF performs $L_{\max}$ message-passing steps using a bipartite GNN convolution $f_\theta(\cdot)$, where $f_\theta(\cdot)$ realizes Eq.~\ref{eq:ngcf} via efficient sparse matrix multiplication and shares parameters across users and items. After each layer, NTCF applies a curvature-driven node mask that decides, per node, whether the new layer output is kept or whether the node has already exhausted its depth budget and should retain its layer-$0$ representation. Concretely, define the binary mask $\mathbf{m}^{(l)}\in\{0,1\}^{|\mathcal{V}|}$ with $\mathbf{m}_v^{(l)}=\mathbb{I}[L_v>l]$, and the per-layer update:
\begin{equation}
\widetilde{\mathbf{E}}^{(l+1)} \;=\; \mathrm{Norm}(\mathrm{Drop}(\sigma_a(f_\theta({\mathbf{E}}^{(l)})))),
\label{eq:propagate}
\end{equation}
\begin{equation}
\mathbf{E}^{(l+1)} \;=\; \mathbf{m}^{(l)} \!\odot\! \widetilde{\mathbf{E}}^{(l+1)} \;+\; (\mathbf{1}-\mathbf{m}^{(l)}) \!\odot\! \mathbf{E}^{(0)},
\label{eq:mask}
\end{equation}
where $\sigma_a(\cdot)$ denotes LeakyReLU, $\mathrm{Drop}(\cdot)$ message dropout, and $\mathrm{Norm}(\cdot)$ $\ell_2$ row normalization. Eq.~\ref{eq:mask} keeps updating a node $v$ while $l<L_v$ and otherwise resets it to its initial embedding $\mathbf{e}_v^{(0)}$. Crucially, masking applies to a node's \emph{own} state only: a masked-out $v$ still feeds its layer-$0$ value to still-active neighbors $w\in\mathcal{N}(v)$. This realizes the ``one tree per root'' interpretation---the rooted tree at $v$ terminates at depth $L_v$ while the bidirectional graph is preserved throughout.

\paragraph{Final Aggregation.}
After $L_{\max}$ layers, NTCF aggregates the per-layer embeddings into a final node representation. We support two aggregation strategies. The first, used by default, is layer concatenation, consistent with NGCF: $\hat{\mathbf{E}} = \mathbf{E}^{(0)}\,\|\,\mathbf{E}^{(1)}\,\|\,\cdots\,\|\,\mathbf{E}^{(L_{\max})}$. The second is curvature-weighted mean aggregation, which averages each node's embedding only over its effective layers:
\begin{equation}
\hat{\mathbf{e}}_v \;=\; \frac{1}{L_v + 1}\sum_{l=0}^{L_v}\mathbf{e}_v^{(l)}.
\label{eq:agg}
\end{equation}
The user and item representations $\hat{\mathbf{E}}_{\mathcal{U}}, \hat{\mathbf{E}}_{\mathcal{I}}$ are obtained by splitting $\hat{\mathbf{E}}$. The score for a user-item pair $(u,i)$ is calculated by $\hat{\mathbf{e}}_u^\top \hat{\mathbf{e}}_i$.

\paragraph{Optimization.}
We optimize NTCF with the Bayesian Personalized Ranking (BPR) loss~\cite{rendle2009bpr}:
\begin{equation}
\mathcal{L} = \!\!\sum_{(u,i_p,i_n)\in\mathcal{D}}\!\! -\ln \sigma_s\!\big(\hat{\mathbf{e}}_u^\top \hat{\mathbf{e}}_{i_p} - \hat{\mathbf{e}}_u^\top \hat{\mathbf{e}}_{i_n}\big) + \lambda\|\Theta\|^2,
\label{eq:bpr}
\end{equation}
where $\sigma_s$ is the sigmoid function (distinct from $\sigma_a$ in Eq.~\ref{eq:propagate}), $\mathcal{D}$ is the training set of triplets, and $\lambda$ controls $\ell_2$ regularization.

\paragraph{Complexity.}
The curvature pre-computation in Eq.~\ref{eq:depth} costs $O(|\mathcal{E}|)$ and is performed once and cached. Each training epoch shares the same per-layer sparse-aggregation cost as NGCF, $O(L_{\max}\!\cdot\!|\mathcal{E}|\!\cdot\!d)$, plus a negligible $O(|\mathcal{V}|\!\cdot\!d)$ masking step per layer. Since $L_{\max}\!\leq\!L\!+\!1$, NTCF's training cost exceeds NGCF's by at most a single message-passing layer, with no recurring overhead from the cached curvature.

\subsection{Theoretical Analysis}
\label{sec:theory}
We establish two results: NTCF contains NGCF as a special case (\cref{prop:lower}, a lower bound on representation power), and NTCF preserves a strictly positive layer-$L$ Dirichlet-energy floor that NGCF, under the standard over-smoothing contraction assumption, provably loses (\cref{prop:smoothing}).

\begin{proposition}[NGCF as the Zero-Adjustment Limit]
\label{prop:lower}
Let $\mathrm{NTCF}_\tau$ denote NTCF with curvature threshold $\tau$, and let $\mathrm{NGCF}_L$ denote NGCF with $L$ layers. If $\tau \geq \tau^\star := \max_{v\in\mathcal{V}}|\kappa(v)|$, then for every node $v$ we have $\Delta(v)=0$, hence $L_v=L$ and the masks $\mathbf{m}^{(l)}$ are identically one. The propagation in Eq.~\ref{eq:propagate}--Eq.~\ref{eq:mask} then reduces to $L$ applications of the bipartite convolution $f_\theta$, which by construction realizes the NGCF update of Eq.~\ref{eq:ngcf}. With $\mathrm{Norm}$ and $\mathrm{Drop}$ disabled and concatenation aggregation, $\mathrm{NTCF}_\tau$ recovers exactly the NGCF$_L$ propagation and final-embedding construction; with them enabled, $\mathrm{NTCF}_\tau$ realizes a normalized, dropout-regularized variant whose hypothesis space contains that of $\mathrm{NGCF}_L$.
\end{proposition}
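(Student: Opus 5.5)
The argument is a direct unwinding of the definitions in Eq.~\ref{eq:depth}, Eq.~\ref{eq:propagate} and Eq.~\ref{eq:mask}. It proceeds in four steps.

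\emph{Step 1: depths and masks.} Fix $\tau\ge\tau^\star=\max_v|\kappa(v)|$. For every $v$ we have $|\kappa(v)|\le\tau^\star\le\tau$, so $v$ falls in the middle case of Eq.~\ref{eq:depth} and $\Delta(v)=0$. Note that $\tau^\star$ is finite, since $\kappa(v)\in[-1,1]$ by the bound noted after Eq.~\ref{eq:curvature}. Hence $L_v=\max(L,1)=L$ for $L\ge1$. Consequently $L_{\max}=L$, and $\mathbf{m}_v^{(l)}=\mathbb{I}[L>l]=1$ for all $v$ and all $l\in\{0,\dots,L-1\}$. These are exactly the layers that are executed, so there is no issue at the boundary index $l=L$.

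\emph{Step 2: the mask becomes the identity.} Substituting $\mathbf{m}^{(l)}=\mathbf{1}$ into Eq.~\ref{eq:mask} kills the reset term and gives $\mathbf{E}^{(l+1)}=\widetilde{\mathbf{E}}^{(l+1)}$. Disabling $\mathrm{Norm}$ and $\mathrm{Drop}$, i.e.\ setting both to the identity map, then gives $\mathbf{E}^{(l+1)}=\sigma_a(f_\theta(\mathbf{E}^{(l)}))$. By construction, $f_\theta$ realizes Eq.~\ref{eq:ngcf} row-wise through sparse matrix products with the normalized adjacency $D^{-1/2}AD^{-1/2}$. Taking $\sigma=\sigma_a$ as NGCF's activation, induction on $l$ shows that $\mathbf{e}_v^{(l)}$ coincides with NGCF's layer-$l$ embedding for every $v$ and every $l\le L$, starting from the shared $\mathbf{E}^{(0)}$. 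Concatenating over $l=0,\dots,L_{\max}=L$ then reproduces $\hat{\mathbf{e}}_v$ exactly, and hence also the scores $\hat{\mathbf{e}}_u^\top\hat{\mathbf{e}}_i$ and the BPR objective.

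\emph{Step 3: hypothesis-space containment.} When $\mathrm{Norm}$ and $\mathrm{Drop}$ are enabled, I would argue containment at the level of hypothesis classes rather than pointwise equality. For dropout, the claim holds with dropout rate $0$, or in the inference-time limit, where dropout is the identity. For row normalization, the claim holds if $\mathrm{Norm}$ is treated as a configurable, switchable component, with the identity as the degenerate choice.

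\emph{Main obstacle.} I expect the main difficulty to lie in Step 3. A fixed $\ell_2$ normalization is not absorbable into $\mathbf{W}_1,\mathbf{W}_2$, because it is a nonlinear, scale-removing map. Strict containment therefore holds only if the model family is read as including the configuration in which normalization is disabled. I would make this explicit: the hypothesis space of $\mathrm{NTCF}_\tau$ is the union over component configurations, so it contains $\mathrm{NGCF}_L$ by Step 2. A secondary subtlety is that $\tau$ must be allowed to reach $\tau^\star$, so that the zero-adjustment setting is actually attainable; this holds since $\tau^\star\le1$.
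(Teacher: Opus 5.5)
Your proposal is correct and follows the same route as the paper's proof. The middle branch of Eq.~\ref{eq:depth} gives $\Delta\equiv 0$ and all masks equal to one, Eq.~\ref{eq:mask} then collapses to $\mathbf{E}^{(l+1)}=\widetilde{\mathbf{E}}^{(l+1)}$, and concatenation over $l\le L$ reproduces NGCF's final embedding. Your Step~3 is a more careful version of the paper's one-line claim that $\mathrm{Norm}$ and $\mathrm{Drop}$ ``only enlarge the realizable hypothesis space'': that claim holds only in your union-over-configurations sense, since a fixed $\ell_2$ row normalization is indeed not absorbable into $\mathbf{W}_1,\mathbf{W}_2$, and your reading of $f_\theta$ as the pre-activation map with $\sigma=\sigma_a$ likewise settles a point the paper glosses over.
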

\vspace{-0.05in}
\begin{proof}[Proof]
For $\tau\geq\tau^\star$, $|\kappa(v)|\leq\tau$ holds for all $v$, so the middle branch of Eq.~\ref{eq:depth} is taken everywhere, $\Delta(v)\!=\!0$, and $\mathbf{m}_v^{(l)}\!=\!\mathbb{I}[L_v\!>\!l]\!=\!1$ for all $l<L$. Substituting into Eq.~\ref{eq:mask} gives $\mathbf{E}^{(l+1)}\!=\!\widetilde{\mathbf{E}}^{(l+1)}$ at every layer, i.e.\ the masking has no effect and each layer applies $f_\theta$, the NGCF update of Eq.~\ref{eq:ngcf}. Concatenating $\{\mathbf{E}^{(l)}\}_{l=0}^{L}$ matches NGCF's final embedding construction. When $\mathrm{Norm}$ and $\mathrm{Drop}$ are disabled this is exactly $\mathrm{NGCF}_L$; otherwise the two extra operations only enlarge the realizable hypothesis space.
\end{proof}
\vspace{-0.05in}
Thus a properly tuned NTCF is never worse than NGCF.

\begin{proposition}[Anti-Smoothing]
\label{prop:smoothing}
Let $\mathrm{D}_l(\mathbf{E}) := \sum_{(u,v)\in\mathcal{E}}\|\mathbf{e}_u^{(l)}\!-\!\mathbf{e}_v^{(l)}\|^2$ denote the layer-$l$ Dirichlet energy, and let $\mathbf{P}_\perp$ denote the orthogonal projection onto the \emph{energy-relevant subspace} (the complement of the dominant eigenvector of the normalized adjacency). Following the standard over-smoothing analysis~\cite{oono2020graph,liu2020towards,zhao2020pairnorm}, assume the per-layer propagation operator restricted to this subspace is $\rho$-Lipschitz with $\rho<1$, and let $R_0\!:=\!\max_v\|\mathbf{P}_\perp\mathbf{e}_v^{(0)}\|$ bound the energy-relevant component of the initial embeddings. Let $\mathcal{V}_{+}\!=\!\{v\!:\!\kappa(v)\!>\!\tau\}$ be the positive-curvature nodes; each receives a $-1$ depth adjustment ($L_v\!=\!L\!-\!1$), so its layer-$L$ state is reset to $\mathbf{e}_v^{(0)}$ in NTCF. Writing $\delta_v\!:=\!\|\mathbf{P}_\perp\mathbf{e}_v^{(0)}\|$, we have
\begin{equation}
\resizebox{0.92\hsize}{!}{$
\!\!\!\mathrm{D}_L(\mathbf{E}_{\textsc{ngcf}}) \!\leq\! \rho^{2L}\!\cdot\!\mathrm{D}_0(\mathbf{E}^{(0)}),
\;\;
\mathrm{D}_L(\mathbf{E}_{\textsc{ntcf}}) \!\geq\! \sum_{v\in\mathcal{V}_+}\!\sum_{w\in\mathcal{N}(v)}\!\!\big(\delta_{v}\!-\!\rho^L\!R_0\big)_{+}^{2}\!,$}
\label{eq:smoothbound}
\end{equation}
where $(\cdot)_+\!=\!\max(\cdot,0)$. As $L\!\to\!\infty$, NGCF's upper bound decays as $\rho^{2L}\!\to\!0$, while NTCF's lower bound increases to $\sum_{v\in\mathcal{V}_+}\!\deg(v)\,\delta_{v}^{2}$. Hence, provided at least one reset node retains a non-degenerate energy-relevant component (i.e.\ $\delta_v\!>\!0$ for some $v\!\in\!\mathcal{V}_+$, which holds almost surely under generic random initialization), for sufficiently large $L$ we have $\mathrm{D}_L(\mathbf{E}_{\textsc{ntcf}})\!>\!\mathrm{D}_L(\mathbf{E}_{\textsc{ngcf}})$: NTCF strictly preserves more discriminative information at deep layers.
\end{proposition}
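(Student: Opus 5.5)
The plan is to prove the two inequalities of Eq.~\ref{eq:smoothbound} separately and then compare their limits as $L\to\infty$.

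\textbf{Upper bound for NGCF.} First we note that the Dirichlet energy depends only on the energy-relevant component of the embeddings. The component along the dominant eigenvector is (after the usual degree rescaling) constant across an edge and cancels in $\mathbf{e}_u-\mathbf{e}_v$. Hence
\[
\mathrm{D}_l(\mathbf{E})=\sum_{(u,v)\in\mathcal{E}}\|\mathbf{P}_\perp\mathbf{e}_u^{(l)}-\mathbf{P}_\perp\mathbf{e}_v^{(l)}\|^2 .
\]
We then apply the $\rho$-Lipschitz contraction assumption $L$ times to the stacked projected state $\mathbf{P}_\perp\mathbf{E}$. Following the energy-based formulation of \cite{oono2020graph,zhao2020pairnorm}, we read the assumption as a contraction in the seminorm $\mathrm{D}^{1/2}$. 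This gives $\mathrm{D}_{l+1}\le\rho^2\mathrm{D}_l$, and iterating yields $\mathrm{D}_L(\mathbf{E}_{\textsc{ngcf}})\le\rho^{2L}\mathrm{D}_0(\mathbf{E}^{(0)})$.

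\textbf{Lower bound for NTCF.} We will need a per-node consequence of the same assumption: for every node $w$ updated through all $L$ layers,
\[
\|\mathbf{P}_\perp\mathbf{e}_w^{(L)}\|\le\rho^L R_0 .
\]
This follows by applying the contraction layer by layer and starting from $\max_v\|\mathbf{P}_\perp\mathbf{e}_v^{(0)}\|=R_0$. Reset nodes only inject layer-$0$ values, which are also bounded by $R_0$, so the bound is not broken by them.

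Next we restrict the sum in $\mathrm{D}_L(\mathbf{E}_{\textsc{ntcf}})$ to edges incident to $\mathcal{V}_+$; this is valid because every term is nonnegative. For $v\in\mathcal{V}_+$, Eq.~\ref{eq:mask} with $L_v=L-1$ gives $\mathbf{m}_v^{(L-1)}=0$, so $\mathbf{e}_v^{(L)}=\mathbf{e}_v^{(0)}$. For $w\in\mathcal{N}(v)$, the reverse triangle inequality gives
\[
\|\mathbf{e}_v^{(L)}-\mathbf{e}_w^{(L)}\|\ge\|\mathbf{P}_\perp\mathbf{e}_v^{(0)}\|-\|\mathbf{P}_\perp\mathbf{e}_w^{(L)}\|\ge\delta_v-\rho^L R_0 .
\]
Squaring and taking positive parts gives the stated sum.

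\textbf{Double counting.} We must check that edges with both endpoints in $\mathcal{V}_+$ are not counted twice. This is the step I expect to be the main obstacle.
\begin{itemize}
\item A $\kappa>\tau\ge 0$ node has degree below its neighbors' average. Two adjacent peripheral nodes are still possible, but the reset neighbor $w$ then satisfies $\|\mathbf{P}_\perp\mathbf{e}_w^{(L)}\|\le R_0$, not $\le\rho^L R_0$.
\item The cleanest fix is to orient each edge towards one chosen endpoint in $\mathcal{V}_+$ and sum over oriented pairs. Alternatively, accept that the bound counts each edge at most twice and absorb the factor $2$; the paper's form suggests the intended argument simply uses the per-edge bound.
\end{itemize}
I would state explicitly the convention that $w$ ranges over neighbors that are not reset, or that neighbors in $\mathcal{V}_+$ are handled by the same inequality with $\rho^L R_0$ replaced by a smaller bound.

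\textbf{Limits and conclusion.}
\begin{itemize}
\item As $L\to\infty$, $\rho^L R_0\to0$, so the NTCF lower bound increases monotonically to $\sum_{v\in\mathcal{V}_+}\deg(v)\,\delta_v^2$.
\item The NGCF upper bound tends to $0$, since $\mathrm{D}_0(\mathbf{E}^{(0)})$ is finite.
\item If some $\delta_v>0$, which holds almost surely because $\mathbf{P}_\perp\mathbf{e}_v^{(0)}=0$ is a measure-zero event for continuous initialization, the limiting lower bound is strictly positive. Choosing $L$ large enough that the NGCF bound falls below the NTCF bound yields $\mathrm{D}_L(\mathbf{E}_{\textsc{ntcf}})>\mathrm{D}_L(\mathbf{E}_{\textsc{ngcf}})$.
\end{itemize}
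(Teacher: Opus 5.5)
Your route is the paper's: contraction of the energy-relevant component for the NGCF bound and, for NTCF, the exact reset on $\mathcal{V}_+$, a per-node $\rho^L R_0$ bound on normally propagated neighbors, non-expansiveness of $\mathbf{P}_\perp$ and the reverse triangle inequality. The obstacle you flagged is a genuine gap, and none of your fixes closes it.

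An edge $(v,w)$ with $v,w\in\mathcal{V}_+$ has both endpoints reset. It therefore contributes $\|\mathbf{e}_v^{(0)}-\mathbf{e}_w^{(0)}\|^2$ to $\mathrm{D}_L(\mathbf{E}_{\textsc{ntcf}})$, which the reverse triangle inequality controls only by $(\delta_v-\delta_w)^2$. Yet the displayed right-hand side charges this single edge $(\delta_v-\rho^LR_0)_+^2+(\delta_w-\rho^LR_0)_+^2$. Such edges are typical: a low-degree user adjacent to a low-degree item, each also attached to a hub, puts both in $\mathcal{V}_+$. Your proposed fixes do not help:
\begin{itemize}
\item Orienting edges, or absorbing a factor $2$, proves a weaker inequality and still leaves that edge uncontrolled.
\item Replacing $\rho^LR_0$ ``by a smaller bound'' goes the wrong way: a reset neighbor has $\|\mathbf{P}_\perp\mathbf{e}_w^{(L)}\|=\delta_w$, which may be as large as $R_0$.
\end{itemize}
If $\mathbf{e}_v^{(0)}\approx\mathbf{e}_w^{(0)}$, the edge contributes almost nothing while the bound demands roughly $\delta_v^2+\delta_w^2$ from it. Nothing in the hypotheses guarantees that energy elsewhere, so the displayed form cannot be derived as stated. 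The paper's own proof has the same hole: it treats only neighbors that ``propagate normally'' and then sums over all edges incident to $\mathcal{V}_+$.

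What the argument does give, counting each edge once, is
\[
\mathrm{D}_L(\mathbf{E}_{\textsc{ntcf}})\;\ge\;\sum_{v\in\mathcal{V}_+}\;\sum_{w\in\mathcal{N}(v)\setminus\mathcal{V}_+}\big(\delta_v-\rho^LR_0\big)_+^2\;+\sum_{(u,i)\in\mathcal{E}:\,u,i\in\mathcal{V}_+}\big\|\mathbf{e}_u^{(0)}-\mathbf{e}_i^{(0)}\big\|^2 .
\]
As $L\to\infty$ the first sum increases to $\sum_{v\in\mathcal{V}_+}|\mathcal{N}(v)\setminus\mathcal{V}_+|\,\delta_v^2$, and the second does not depend on $L$. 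So the floor is not $\sum_{v\in\mathcal{V}_+}\deg(v)\,\delta_v^2$, and your final paragraph must be restated. The floor is still almost surely positive, since each $v\in\mathcal{V}_+$ has an incident edge and both $\delta_v>0$ and $\mathbf{e}_u^{(0)}\neq\mathbf{e}_i^{(0)}$ hold a.s. The strict comparison for large $L$ therefore survives.

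Separately, your justification of $\|\mathbf{P}_\perp\mathbf{e}_w^{(L)}\|\le\rho^LR_0$ is invalid as written: an $R_0$-sized component injected at layer $l$ would be contracted only $L-l$ more times. The correct reason is different. For $L\ge 2$ every $L_v\ge L-1$, so $\mathbf{m}^{(l)}\equiv\mathbf{1}$ for $l\le L-2$. Hence no reset occurs before $\mathbf{E}^{(L)}$ is formed, and each $w\notin\mathcal{V}_+$ carries exactly its uniform-depth layer-$L$ state.

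Finally, you read the Lipschitz hypothesis in two ways: as a $\mathrm{D}^{1/2}$-contraction for the upper bound, and as a contraction of $\max_v\|\mathbf{P}_\perp\mathbf{e}_v\|$ for the lower bound. These are different assumptions, and the paper makes an analogous pair. Both should be stated explicitly.
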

\vspace{-0.05in}
\begin{proof}[Proof]
\textbf{NGCF upper bound.} The standard analysis~\cite{oono2020graph} bounds the Dirichlet energy by the energy of the projected differences, which contract under the $\rho$-Lipschitz assumption: $\|\mathbf{P}_\perp(\mathbf{e}_u^{(L)}\!-\!\mathbf{e}_v^{(L)})\|\!\leq\!\rho^L\|\mathbf{P}_\perp(\mathbf{e}_u^{(0)}\!-\!\mathbf{e}_v^{(0)})\|$. Squaring and summing over $\mathcal{E}$ yields $\mathrm{D}_L(\mathbf{E}_{\textsc{ngcf}})\!\leq\!\rho^{2L}\mathrm{D}_0(\mathbf{E}^{(0)})$.
\textbf{NTCF lower bound.} For $v\!\in\!\mathcal{V}_+$ the mask sets $\mathbf{e}_v^{(L)}\!=\!\mathbf{e}_v^{(0)}$ exactly (since $\mathbf{m}_v^{(L-1)}\!=\!\mathbb{I}[L_v\!>\!L\!-\!1]\!=\!0$ as $L_v\!=\!L\!-\!1$), so its energy-relevant component is preserved: $\|\mathbf{P}_\perp\mathbf{e}_v^{(L)}\|\!=\!\delta_v$. For any neighbor $w$ that propagates normally, the contraction shrinks its energy-relevant component toward zero: $\|\mathbf{P}_\perp\mathbf{e}_w^{(L)}\|\!\leq\!\rho^L\|\mathbf{P}_\perp\mathbf{e}_w^{(0)}\|\!\leq\!\rho^L R_0$. Since the orthogonal projection $\mathbf{P}_\perp$ is non-expansive, applying it and then the reverse triangle inequality gives
$\|\mathbf{e}_v^{(L)}\!-\!\mathbf{e}_w^{(L)}\|\!\geq\!\|\mathbf{P}_\perp(\mathbf{e}_v^{(L)}\!-\!\mathbf{e}_w^{(L)})\|\!\geq\!\big|\,\|\mathbf{P}_\perp\mathbf{e}_v^{(L)}\|\!-\!\|\mathbf{P}_\perp\mathbf{e}_w^{(L)}\|\,\big|\!\geq\!(\delta_v\!-\!\rho^L R_0)_+$.
Squaring and summing over all edges incident to $\mathcal{V}_+$ yields the stated lower bound.
\end{proof}
\vspace{-0.05in}
\cref{prop:smoothing} formalizes the intuition behind Eq.~\ref{eq:depth}: retaining the layer-$0$ representations of otherwise over-smoothed peripheral nodes sustains a Dirichlet-energy floor as $L$ grows. Together with \cref{prop:lower}, this makes NTCF strictly better than NGCF in deep regimes whenever $|\mathcal{V}_+|>0$, and no worse otherwise.

\begin{table*}[!t]
\centering
\caption{Performance comparison of baselines and NTCF on the three datasets in terms of R@$K$ and N@$K$. Best results are in \textbf{bold}.}
\vskip -0.15in
\label{tab:main}
\small
\resizebox{\linewidth}{!}{
\begin{tabular}{l|cccccc|cccccc|cccccc}
\toprule
Datasets & \multicolumn{6}{c|}{Kindle} & \multicolumn{6}{c|}{Pinterest} & \multicolumn{6}{c}{Yelp} \\
Metrics & R@10 & R@20 & R@50 & N@10 & N@20 & N@50 & R@10 & R@20 & R@50 & N@10 & N@20 & N@50 & R@10 & R@20 & R@50 & N@10 & N@20 & N@50 \\
\midrule
NCF~\cite{he2017neural}   & 0.1185 & 0.1583 & 0.2232 & 0.0744 & 0.0850 & 0.0989 & 0.0777 & 0.1276 & 0.2365 & 0.0478 & 0.0621 & 0.0868 & 0.0531 & 0.0885 & 0.1654 & 0.0377 & 0.0486 & 0.0685 \\
NGCF~\cite{wang2019neural}    & 0.1306 & 0.1770 & 0.2340 & 0.0796 & 0.0919 & 0.1086 & 0.0870 & 0.1428 & 0.2633 & 0.0545 & 0.0721 & 0.1018 & 0.0630 & 0.1026 & 0.1864 & 0.0446 & 0.0567 & 0.0784 \\
LightGCN~\cite{he2020lightgcn}  & 0.1570 & 0.2080 & 0.2888 & 0.0981 & 0.1117 & 0.1290 & 0.1000 & 0.1621 & 0.2862 & 0.0635 & 0.0830 & 0.1136 & 0.0730 & 0.1163 & 0.2016 & 0.0520 & 0.0652 & 0.0875 \\
UltraGCN~\cite{mao2021ultragcn}   & 0.1520 & 0.2013 & 0.2791 & 0.0939 & 0.1082 & 0.1241 & 0.0967 & 0.1588 & 0.2817 & 0.0613 & 0.0808 & 0.1098 & 0.0758 & 0.1180 & 0.2029 & 0.0543 & 0.0666 & 0.0891 \\
IMP-GCN~\cite{liu2021interest}    & 0.1551 & 0.2057 & 0.2850 & 0.0963 & 0.1101 & 0.1269 & 0.0985 & 0.1603 & 0.2845 & 0.0624 & 0.0814 & 0.1113 & 0.0751 & 0.1182 & 0.2017 & 0.0539 & 0.0662 & 0.0885 \\
FKAN-GCF~\cite{xu2024fourierkan}    & 0.1580 & 0.2089 & 0.2900 & 0.0985 & 0.1125 & 0.1290 & 0.1003 & 0.1614 & 0.2860 & 0.0633 & 0.0827 & 0.1129 & 0.0755 & 0.1183 & 0.2019 & 0.0540 & 0.0666 & 0.0895 \\
\midrule
\textbf{NTCF (Ours)} & \textbf{0.1600} & \textbf{0.2104} & \textbf{0.2913} & \textbf{0.0999} & \textbf{0.1137} & \textbf{0.1309} & \textbf{0.1023} & \textbf{0.1635} & \textbf{0.2888} & \textbf{0.0642} & \textbf{0.0838} & \textbf{0.1141} & \textbf{0.0769} & \textbf{0.1200} & \textbf{0.2036} & \textbf{0.0547} & \textbf{0.0669} & \textbf{0.0905} \\
\midrule
\textbf{w/o curvature} & 0.1306 & 0.1770 & 0.2340 & 0.0796 & 0.0919 & 0.1086 & 0.0870 & 0.1428 & 0.2633 & 0.0545 & 0.0721 & 0.1018 & 0.0630 & 0.1026 & 0.1864 & 0.0446 & 0.0567 & 0.0784 \\
\textbf{r/ mean agg.} & 0.1589 & 0.2090 & 0.2902 & 0.0993 & 0.1133 & 0.1296 & 0.1015 & 0.1626 & 0.2875 & 0.0636 & 0.0832 & 0.1135 & 0.0761 & 0.1187 & 0.2019 & 0.0541 & 0.0660 & 0.0898 \\
\bottomrule
\end{tabular}
}
\vskip -0.1in
\end{table*}

\begin{table*}[!t]
\centering
\caption{Compatibility analysis: replacing the LightGCN backbone of advanced self-supervised models with NTCF.}
\vskip -0.15in
\label{tab:ssl}
\small
\resizebox{\linewidth}{!}{
\begin{tabular}{ll|cccccc|cccccc|cccccc}
\toprule
\multicolumn{2}{c|}{Datasets} & \multicolumn{6}{c|}{Kindle} & \multicolumn{6}{c|}{Pinterest} & \multicolumn{6}{c}{Yelp} \\
Method & Backbone & R@10 & R@20 & R@50 & N@10 & N@20 & N@50 & R@10 & R@20 & R@50 & N@10 & N@20 & N@50 & R@10 & R@20 & R@50 & N@10 & N@20 & N@50 \\
\midrule
\multirow{2}{*}{SimGCL~\cite{yu2022graph}}     
& LightGCN & 0.1659 & 0.2138 & 0.2903 & 0.1052 & 0.1184 & 0.1350 & 0.1051 & 0.1576 & 0.2442 & 0.0705 & 0.0871 & 0.1086 & 0.0908 & 0.1331 & 0.2130 & 0.0673 & 0.0811 & 0.1039\\
& \textbf{NTCF}    & \textbf{0.1698} & \textbf{0.2180} & \textbf{0.2941} & \textbf{0.1077} & \textbf{0.1205} & \textbf{0.1372} & \textbf{0.1084} & \textbf{0.1609} & \textbf{0.2475} & \textbf{0.0717} & \textbf{0.0884} & \textbf{0.1101} & \textbf{0.0924} & \textbf{0.1352} & \textbf{0.2156} & \textbf{0.0689} & \textbf{0.0830} & \textbf{0.1056}\\
\midrule
\multirow{2}{*}{LightGCL~\cite{cailightgcl}}   
& LightGCN & 0.1601 & 0.2098 & 0.2901 & 0.0996 & 0.1133 & 0.1303 & 0.0881 & 0.1322 & 0.2383 & 0.0534 & 0.0673 & 0.0981 & 0.0766 & 0.1188 & 0.2008 & 0.0551 & 0.0681 & 0.0899\\
& \textbf{NTCF}    & \textbf{0.1642} & \textbf{0.2136} & \textbf{0.2930} & \textbf{0.1013} & \textbf{0.1149} & \textbf{0.1320} & \textbf{0.0910} & \textbf{0.1356} & \textbf{0.2407} & \textbf{0.0552} & \textbf{0.0691} & \textbf{0.0997} & \textbf{0.0778} & \textbf{0.1205} & \textbf{0.2021} & \textbf{0.0558} & \textbf{0.0689} & \textbf{0.0911}\\
\midrule
\multirow{2}{*}{RecDCL~\cite{zhang2024recdcl}} 
& LightGCN & 0.1613 & 0.2108 & 0.2861 & 0.1015 & 0.1142 & 0.1314 & 0.1021 & 0.1619 & 0.2898 & 0.0663 & 0.0839 & 0.1144 & 0.0850 & 0.1304 & 0.2115 & 0.0609 & 0.0755 & 0.0989\\
& \textbf{NTCF}    & \textbf{0.1640} & \textbf{0.2133} & \textbf{0.2888} & \textbf{0.1041} & \textbf{0.1159} & \textbf{0.1334} & \textbf{0.1036} & \textbf{0.1642} & \textbf{0.2916} & \textbf{0.0678} & \textbf{0.0855} & \textbf{0.1163} & \textbf{0.0869} & \textbf{0.1322} & \textbf{0.2139} & \textbf{0.0622} & \textbf{0.0770} & \textbf{0.1000}\\
\midrule
\multirow{2}{*}{NLGCL~\cite{xu2025nlgcl}} 
& LightGCN & 0.1702 & 0.2160 & 0.2945 & 0.1081 & 0.1208 & 0.1379 & 0.1148 & 0.1793 & 0.3089 & 0.0760 & 0.0948 & 0.1267 & 0.0952 & 0.1414 & 0.2327 & 0.0703 & 0.0853 & 0.1094\\
& \textbf{NTCF}    & \textbf{0.1720} & \textbf{0.2181} & \textbf{0.2964} & \textbf{0.1093} & \textbf{0.1224} & \textbf{0.1397} & \textbf{0.1170} & \textbf{0.1816} & \textbf{0.3109} & \textbf{0.0773} & \textbf{0.0965} & \textbf{0.1288} & \textbf{0.0968} & \textbf{0.1433} & \textbf{0.2348} & \textbf{0.0716} & \textbf{0.0873} & \textbf{0.1117}\\
\bottomrule
\end{tabular}
}
\vskip -0.1in
\end{table*}

\begin{table}[!t]
\centering
\caption{Statistics of all datasets.}
\vskip -0.15in
\label{tab:dataset}
\setlength{\tabcolsep}{6pt}
\small
\begin{tabular}{lcccc}
\toprule
Dataset & \#Users & \#Items & \#Interactions & Sparsity \\
\midrule
Kindle    & 60,468 & 57,212 & 880,859 & 99.975\% \\
Pinterest & 55,188 & 9,912 & 1,445,622 & 99.736\% \\
Yelp      & 45,477 & 30,708 & 1,777,765 & 99.873\% \\
\bottomrule
\end{tabular}
\vskip -0.15in
\end{table}

\section{Experiment}

\subsection{Datasets and Evaluation Metrics}
To validate NTCF, we experiment on three public datasets: \textbf{Yelp}~\cite{xu2025nlgcl}, \textbf{Amazon Kindle store}~\cite{he2016ups}, and \textbf{Pinterest}~\cite{geng2015learning}. They vary in scale and sparsity. For Yelp we use a 15-core setting (minimum 15 interactions), and for Kindle and Pinterest we filter out users and items with fewer than 5 interactions. Following prior works~\cite{lin2022improving,zhang2024recdcl,xu2024fourierkan,chen2025squeeze}, we adopt a user-based split with training, validation, and test sets in an $8\!:\!1\!:\!1$ ratio. We report two established metrics~\cite{he2020lightgcn}: Recall@$K$ (R@$K$) and NDCG@$K$ (N@$K$) for $K\in\{10,20,50\}$. All results are averaged over $5$ runs with different seeds.


\subsection{Baselines and Experimental Settings}
To verify the effectiveness of our NTCF, we compare it with six widely used GCN-based backbone models: \textbf{NCF}~\cite{he2017neural}, \textbf{NGCF}~\cite{wang2019neural}, \textbf{LightGCN}~\cite{he2020lightgcn}, \textbf{UltraGCN}~\cite{mao2021ultragcn}, \textbf{IMP-GCN}~\cite{liu2021interest}, and \textbf{FKAN-GCF}~\cite{xu2024fourierkan}. Moreover, to verify NTCF's compatibility with self-supervised methods, we replace the original LightGCN backbone of four advanced self-supervised models with NTCF: \textbf{SimGCL}~\cite{yu2022graph}, \textbf{LightGCL}~\cite{cailightgcl}, \textbf{RecDCL}~\cite{zhang2024recdcl}, and \textbf{NLGCL}~\cite{xu2025nlgcl,xu2026nlgclp}. For a fair comparison, we fix the embedding size to $64$ for all models, initialize embeddings with Xavier initialization~\cite{glorot2010understanding}, and use Adam~\cite{kingma2014adam} as the optimizer. Hyper-parameters of all baselines are tuned following their published papers. For NTCF, we fix $\lambda=10^{-4}$, tune the layer number $L$ from $\{1,2,3,4\}$, the curvature threshold $\tau$ from $\{0.05, 0.1, 0.2, 0.4\}$, and the message dropout ratio from $\{0.0, 0.1, 0.2, 0.3\}$.

\subsection{Performance Comparison}
The results are reported in \cref{tab:main}. NTCF consistently outperforms all baselines on both metrics across all three datasets, which we attribute to curvature-aware propagation: peripheral nodes are spared late-layer over-smoothing, while hub nodes absorb farther collaborative signals. Its only overhead over NGCF is the per-node curvature score, computed once in $O(|\mathcal{E}|)$ and cached.

\subsection{Compatibility Analysis}
We further evaluate the compatibility of NTCF with advanced self-supervised graph-based recommenders by replacing their default LightGCN backbones with NTCF. As shown in \cref{tab:ssl}, all four self-supervised methods (SimGCL, LightGCL, RecDCL, and NLGCL) consistently benefit from NTCF on every dataset and metric. This confirms that adapting propagation depth to the local geometry of each node is a property combinable with self-supervised contrastive objectives, and that NTCF can be deployed as a drop-in backbone improvement for the modern SSL-based recommendation pipeline.

\subsection{Ablation Study}
\cref{tab:main} also reports two ablations. \textbf{w/o curvature} sets $\tau\!\to\!\infty$ so that $\Delta(v)\!\equiv\!0$ for all nodes, recovering the NGCF-equivalent of NTCF predicted by \cref{prop:lower}. \textbf{r/ mean agg.} replaces concatenation with the curvature-weighted mean aggregation in Eq.~\ref{eq:agg}. The curvature-aware variant outperforms the uniform-depth variant on all datasets, empirically corroborating the strict-improvement direction predicted by \cref{prop:smoothing}, and concatenation slightly outperforms mean aggregation, justifying its use as the default.

\subsection{Hyper-parameter Analysis}
\cref{fig:sensitivity} reports Recall@20 as we vary the base depth $L$, the curvature threshold $\tau$, and the dropout ratio. The optimum is consistent across all three datasets: $L=3$, $\tau=0.1$, and dropout $=0.2$. NTCF prefers a moderate $L$ (consistent with prior GCF observations~\cite{he2020lightgcn,wang2019neural}); deeper stacks ($L=4$) start to over-smooth, while a shallow $L=1$ removes the very signal NTCF is built to refine. A small but non-zero $\tau$ is preferred: $\tau\!=\!0.1$ keeps a non-trivial fraction of nodes in the curvature-adjusted regime, whereas $\tau\!=\!0.4$ collapses NTCF toward NGCF and degrades performance. Performance varies by less than $0.5$ percentage points (in absolute Recall@20) in a neighborhood of each optimum, indicating that NTCF is robust to hyper-parameter choice.

\begin{figure}[t]
\centering
\includegraphics[width=1\linewidth]{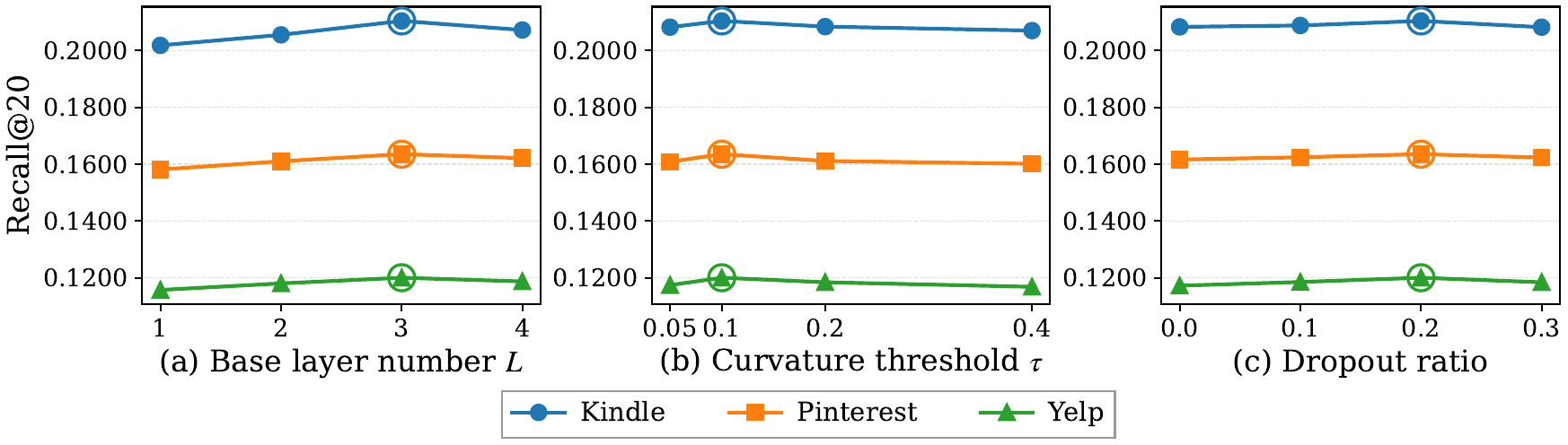}
\vskip -0.1in
\caption{Hyper-parameter analysis of NTCF across all datasets.}
\label{fig:sensitivity}
\vskip -0.15in
\end{figure}

\section{Conclusion}
In this paper, we revisit the structural assumptions of graph collaborative filtering and identify that fixing a global propagation depth ignores the heterogeneous local geometry of real user--item graphs. Motivated by this observation, we propose Neural Tree Collaborative Filtering (NTCF), which reinterprets each node's neighborhood as a rooted tree and assigns a node-specific propagation depth based on a closed-form local-degree-imbalance curvature score. We theoretically prove that NTCF strictly generalizes NGCF, degenerating to NGCF in the zero-adjustment limit (lower bound), and that, under the same Lipschitz contraction assumption used to analyze over-smoothing, NTCF maintains a non-vanishing Dirichlet-energy floor that uniform-depth NGCF provably loses as $L$ grows. Extensive experiments on three real-world datasets verify that NTCF consistently outperforms widely used GCF backbones and further boosts state-of-the-art self-supervised recommenders when used as their backbone.  We hope this motivates further work on geometry-aware, node-adaptive message-passing schedules for GCF.

\begin{acks}
This work was supported by the UGC General Research Fund no. 17209822 and the Innovation and Technology Commission Fund no. ITS/383/23FP from Hong Kong.
\end{acks}

\section*{GenAI Usage Disclosure}
The authors used generative AI tools (large language models) only for light language polishing of the manuscript text. All research ideas, theoretical results, model design, experimental implementation, data analysis, and figures were conceived, executed, and verified entirely by the authors. No generative AI was used to produce any code, experimental result, theorem proof, citation, or figure that appears in this work.

\balance
\bibliographystyle{ACM-Reference-Format}

\begin{thebibliography}{30}


\ifx \showCODEN    \undefined \def \showCODEN     #1{\unskip}     \fi
\ifx \showISBNx    \undefined \def \showISBNx     #1{\unskip}     \fi
\ifx \showISBNxiii \undefined \def \showISBNxiii  #1{\unskip}     \fi
\ifx \showISSN     \undefined \def \showISSN      #1{\unskip}     \fi
\ifx \showLCCN     \undefined \def \showLCCN      #1{\unskip}     \fi
\ifx \shownote     \undefined \def \shownote      #1{#1}          \fi
\ifx \showarticletitle \undefined \def \showarticletitle #1{#1}   \fi
\ifx \showURL      \undefined \def \showURL       {\relax}        \fi
\providecommand\bibfield[2]{#2}
\providecommand\bibinfo[2]{#2}
\providecommand\natexlab[1]{#1}
\providecommand\showeprint[2][]{arXiv:#2}

\bibitem[Cai et~al\mbox{.}(2023)]%
        {cailightgcl}
\bibfield{author}{\bibinfo{person}{Xuheng Cai}, \bibinfo{person}{Chao Huang}, \bibinfo{person}{Lianghao Xia}, {and} \bibinfo{person}{Xubin Ren}.} \bibinfo{year}{2023}\natexlab{}.
\newblock \showarticletitle{LightGCL: Simple Yet Effective Graph Contrastive Learning for Recommendation}. In \bibinfo{booktitle}{\emph{The Eleventh International Conference on Learning Representations}}.
\newblock


\bibitem[Chen et~al\mbox{.}(2020)]%
        {chen2020revisiting}
\bibfield{author}{\bibinfo{person}{Lei Chen}, \bibinfo{person}{Le Wu}, \bibinfo{person}{Richang Hong}, \bibinfo{person}{Kun Zhang}, {and} \bibinfo{person}{Meng Wang}.} \bibinfo{year}{2020}\natexlab{}.
\newblock \showarticletitle{Revisiting graph based collaborative filtering: A linear residual graph convolutional network approach}. In \bibinfo{booktitle}{\emph{Proceedings of the AAAI conference on artificial intelligence}}, Vol.~\bibinfo{volume}{34}. \bibinfo{pages}{27--34}.
\newblock


\bibitem[Chen et~al\mbox{.}(2025a)]%
        {chen2025hypercomplex}
\bibfield{author}{\bibinfo{person}{Zheyu Chen}, \bibinfo{person}{Jinfeng Xu}, \bibinfo{person}{Hewei Wang}, \bibinfo{person}{Shuo Yang}, \bibinfo{person}{Zitong Wan}, {and} \bibinfo{person}{Haibo Hu}.} \bibinfo{year}{2025}\natexlab{a}.
\newblock \showarticletitle{Hypercomplex Prompt-aware Multimodal Recommendation}. In \bibinfo{booktitle}{\emph{Proceedings of the 34th ACM International Conference on Information and Knowledge Management}}. \bibinfo{pages}{403--414}.
\newblock


\bibitem[Chen et~al\mbox{.}(2025b)]%
        {chen2025squeeze}
\bibfield{author}{\bibinfo{person}{Zheyu Chen}, \bibinfo{person}{Jinfeng Xu}, \bibinfo{person}{Yutong Wei}, {and} \bibinfo{person}{Ziyue Peng}.} \bibinfo{year}{2025}\natexlab{b}.
\newblock \showarticletitle{Squeeze and Excitation: A Weighted Graph Contrastive Learning for Collaborative Filtering}. In \bibinfo{booktitle}{\emph{Proceedings of the 48th International ACM SIGIR Conference on Research and Development in Information Retrieval}}. \bibinfo{pages}{2769--2773}.
\newblock


\bibitem[Geng et~al\mbox{.}(2015)]%
        {geng2015learning}
\bibfield{author}{\bibinfo{person}{Xue Geng}, \bibinfo{person}{Hanwang Zhang}, \bibinfo{person}{Jingwen Bian}, {and} \bibinfo{person}{Tat-Seng Chua}.} \bibinfo{year}{2015}\natexlab{}.
\newblock \showarticletitle{Learning image and user features for recommendation in social networks}. In \bibinfo{booktitle}{\emph{Proceedings of the IEEE international conference on computer vision}}. \bibinfo{pages}{4274--4282}.
\newblock


\bibitem[Glorot and Bengio(2010)]%
        {glorot2010understanding}
\bibfield{author}{\bibinfo{person}{Xavier Glorot} {and} \bibinfo{person}{Yoshua Bengio}.} \bibinfo{year}{2010}\natexlab{}.
\newblock \showarticletitle{Understanding the difficulty of training deep feedforward neural networks}. In \bibinfo{booktitle}{\emph{Proceedings of the thirteenth international conference on artificial intelligence and statistics}}. JMLR Workshop and Conference Proceedings, \bibinfo{pages}{249--256}.
\newblock


\bibitem[He and McAuley(2016)]%
        {he2016ups}
\bibfield{author}{\bibinfo{person}{Ruining He} {and} \bibinfo{person}{Julian McAuley}.} \bibinfo{year}{2016}\natexlab{}.
\newblock \showarticletitle{Ups and downs: Modeling the visual evolution of fashion trends with one-class collaborative filtering}. In \bibinfo{booktitle}{\emph{proceedings of the 25th international conference on world wide web}}. \bibinfo{pages}{507--517}.
\newblock


\bibitem[He et~al\mbox{.}(2020)]%
        {he2020lightgcn}
\bibfield{author}{\bibinfo{person}{Xiangnan He}, \bibinfo{person}{Kuan Deng}, \bibinfo{person}{Xiang Wang}, \bibinfo{person}{Yan Li}, \bibinfo{person}{Yongdong Zhang}, {and} \bibinfo{person}{Meng Wang}.} \bibinfo{year}{2020}\natexlab{}.
\newblock \showarticletitle{Lightgcn: Simplifying and powering graph convolution network for recommendation}. In \bibinfo{booktitle}{\emph{Proceedings of the 43rd International ACM SIGIR conference on research and development in Information Retrieval}}. \bibinfo{pages}{639--648}.
\newblock


\bibitem[He et~al\mbox{.}(2017)]%
        {he2017neural}
\bibfield{author}{\bibinfo{person}{Xiangnan He}, \bibinfo{person}{Lizi Liao}, \bibinfo{person}{Hanwang Zhang}, \bibinfo{person}{Liqiang Nie}, \bibinfo{person}{Xia Hu}, {and} \bibinfo{person}{Tat-Seng Chua}.} \bibinfo{year}{2017}\natexlab{}.
\newblock \showarticletitle{Neural collaborative filtering}. In \bibinfo{booktitle}{\emph{Proceedings of the 26th international conference on world wide web}}. \bibinfo{pages}{173--182}.
\newblock


\bibitem[Kingma and Ba(2014)]%
        {kingma2014adam}
\bibfield{author}{\bibinfo{person}{Diederik~P Kingma} {and} \bibinfo{person}{Jimmy Ba}.} \bibinfo{year}{2014}\natexlab{}.
\newblock \showarticletitle{Adam: A method for stochastic optimization}.
\newblock \bibinfo{journal}{\emph{arXiv preprint arXiv:1412.6980}} (\bibinfo{year}{2014}).
\newblock


\bibitem[Lin et~al\mbox{.}(2011)]%
        {lin2011ricci}
\bibfield{author}{\bibinfo{person}{Yong Lin}, \bibinfo{person}{Linyuan Lu}, {and} \bibinfo{person}{Shing-Tung Yau}.} \bibinfo{year}{2011}\natexlab{}.
\newblock \showarticletitle{Ricci curvature of graphs}.
\newblock \bibinfo{journal}{\emph{Tohoku Mathematical Journal}} \bibinfo{volume}{63}, \bibinfo{number}{4} (\bibinfo{year}{2011}), \bibinfo{pages}{605--627}.
\newblock


\bibitem[Lin et~al\mbox{.}(2022)]%
        {lin2022improving}
\bibfield{author}{\bibinfo{person}{Zihan Lin}, \bibinfo{person}{Changxin Tian}, \bibinfo{person}{Yupeng Hou}, {and} \bibinfo{person}{Wayne~Xin Zhao}.} \bibinfo{year}{2022}\natexlab{}.
\newblock \showarticletitle{Improving graph collaborative filtering with neighborhood-enriched contrastive learning}. In \bibinfo{booktitle}{\emph{Proceedings of the ACM web conference 2022}}. \bibinfo{pages}{2320--2329}.
\newblock


\bibitem[Liu et~al\mbox{.}(2021)]%
        {liu2021interest}
\bibfield{author}{\bibinfo{person}{Fan Liu}, \bibinfo{person}{Zhiyong Cheng}, \bibinfo{person}{Lei Zhu}, \bibinfo{person}{Zan Gao}, {and} \bibinfo{person}{Liqiang Nie}.} \bibinfo{year}{2021}\natexlab{}.
\newblock \showarticletitle{Interest-aware message-passing GCN for recommendation}. In \bibinfo{booktitle}{\emph{Proceedings of the web conference 2021}}. \bibinfo{pages}{1296--1305}.
\newblock


\bibitem[Liu et~al\mbox{.}(2020)]%
        {liu2020towards}
\bibfield{author}{\bibinfo{person}{Meng Liu}, \bibinfo{person}{Hongyang Gao}, {and} \bibinfo{person}{Shuiwang Ji}.} \bibinfo{year}{2020}\natexlab{}.
\newblock \showarticletitle{Towards deeper graph neural networks}. In \bibinfo{booktitle}{\emph{Proceedings of the 26th ACM SIGKDD international conference on knowledge discovery \& data mining}}. \bibinfo{pages}{338--348}.
\newblock


\bibitem[Mao et~al\mbox{.}(2021)]%
        {mao2021ultragcn}
\bibfield{author}{\bibinfo{person}{Kelong Mao}, \bibinfo{person}{Jieming Zhu}, \bibinfo{person}{Xi Xiao}, \bibinfo{person}{Biao Lu}, \bibinfo{person}{Zhaowei Wang}, {and} \bibinfo{person}{Xiuqiang He}.} \bibinfo{year}{2021}\natexlab{}.
\newblock \showarticletitle{UltraGCN: ultra simplification of graph convolutional networks for recommendation}. In \bibinfo{booktitle}{\emph{Proceedings of the 30th ACM international conference on information \& knowledge management}}. \bibinfo{pages}{1253--1262}.
\newblock


\bibitem[Ollivier(2009)]%
        {ollivier2009ricci}
\bibfield{author}{\bibinfo{person}{Yann Ollivier}.} \bibinfo{year}{2009}\natexlab{}.
\newblock \showarticletitle{Ricci curvature of {Markov} chains on metric spaces}.
\newblock \bibinfo{journal}{\emph{Journal of Functional Analysis}} \bibinfo{volume}{256}, \bibinfo{number}{3} (\bibinfo{year}{2009}), \bibinfo{pages}{810--864}.
\newblock


\bibitem[Oono and Suzuki(2020)]%
        {oono2020graph}
\bibfield{author}{\bibinfo{person}{Kenta Oono} {and} \bibinfo{person}{Taiji Suzuki}.} \bibinfo{year}{2020}\natexlab{}.
\newblock \showarticletitle{Graph Neural Networks Exponentially Lose Expressive Power for Node Classification}. In \bibinfo{booktitle}{\emph{International Conference on Learning Representations}}.
\newblock


\bibitem[Rendle et~al\mbox{.}(2009)]%
        {rendle2009bpr}
\bibfield{author}{\bibinfo{person}{Steffen Rendle}, \bibinfo{person}{Christoph Freudenthaler}, \bibinfo{person}{Zeno Gantner}, {and} \bibinfo{person}{Lars Schmidt-Thieme}.} \bibinfo{year}{2009}\natexlab{}.
\newblock \showarticletitle{BPR: Bayesian personalized ranking from implicit feedback}. In \bibinfo{booktitle}{\emph{Proceedings of the Twenty-Fifth Conference on Uncertainty in Artificial Intelligence}}. \bibinfo{pages}{452--461}.
\newblock


\bibitem[Wang et~al\mbox{.}(2019)]%
        {wang2019neural}
\bibfield{author}{\bibinfo{person}{Xiang Wang}, \bibinfo{person}{Xiangnan He}, \bibinfo{person}{Meng Wang}, \bibinfo{person}{Fuli Feng}, {and} \bibinfo{person}{Tat-Seng Chua}.} \bibinfo{year}{2019}\natexlab{}.
\newblock \showarticletitle{Neural graph collaborative filtering}. In \bibinfo{booktitle}{\emph{Proceedings of the 42nd international ACM SIGIR conference on Research and development in Information Retrieval}}. \bibinfo{pages}{165--174}.
\newblock


\bibitem[Xu et~al\mbox{.}(2024)]%
        {xu2024aligngroup}
\bibfield{author}{\bibinfo{person}{Jinfeng Xu}, \bibinfo{person}{Zheyu Chen}, \bibinfo{person}{Jinze Li}, \bibinfo{person}{Shuo Yang}, \bibinfo{person}{Hewei Wang}, {and} \bibinfo{person}{Edith~CH Ngai}.} \bibinfo{year}{2024}\natexlab{}.
\newblock \showarticletitle{AlignGroup: Learning and Aligning Group Consensus with Member Preferences for Group Recommendation}. In \bibinfo{booktitle}{\emph{Proceedings of the 33rd ACM International Conference on Information and Knowledge Management}}. \bibinfo{pages}{2682--2691}.
\newblock


\bibitem[Xu et~al\mbox{.}(2025a)]%
        {xu2024fourierkan}
\bibfield{author}{\bibinfo{person}{Jinfeng Xu}, \bibinfo{person}{Zheyu Chen}, \bibinfo{person}{Jinze Li}, \bibinfo{person}{Shuo Yang}, \bibinfo{person}{Wei Wang}, \bibinfo{person}{Xiping Hu}, {and} \bibinfo{person}{Edith Ngai}.} \bibinfo{year}{2025}\natexlab{a}.
\newblock \showarticletitle{Enhancing Graph Collaborative Filtering with FourierKAN Feature Transformation}. In \bibinfo{booktitle}{\emph{Proceedings of the 34th ACM International Conference on Information and Knowledge Management}}. \bibinfo{pages}{5376--5380}.
\newblock


\bibitem[Xu et~al\mbox{.}(2025b)]%
        {xu2025enhancing}
\bibfield{author}{\bibinfo{person}{Jinfeng Xu}, \bibinfo{person}{Zheyu Chen}, \bibinfo{person}{Jinze Li}, \bibinfo{person}{Shuo Yang}, \bibinfo{person}{Wei Wang}, \bibinfo{person}{Xiping Hu}, \bibinfo{person}{Raymond Chi-Wing Wong}, {and} \bibinfo{person}{Edith~CH Ngai}.} \bibinfo{year}{2025}\natexlab{b}.
\newblock \showarticletitle{Enhancing Robustness and Generalization Capability for Multimodal Recommender Systems via Sharpness-Aware Minimization}.
\newblock \bibinfo{journal}{\emph{IEEE Transactions on Knowledge and Data Engineering}} (\bibinfo{year}{2025}).
\newblock


\bibitem[Xu et~al\mbox{.}(2025c)]%
        {xu2025cohesion}
\bibfield{author}{\bibinfo{person}{Jinfeng Xu}, \bibinfo{person}{Zheyu Chen}, \bibinfo{person}{Wei Wang}, \bibinfo{person}{Xiping Hu}, \bibinfo{person}{Sang-Wook Kim}, {and} \bibinfo{person}{Edith~CH Ngai}.} \bibinfo{year}{2025}\natexlab{c}.
\newblock \showarticletitle{COHESION: Composite Graph Convolutional Network with Dual-Stage Fusion for Multimodal Recommendation}. In \bibinfo{booktitle}{\emph{Proceedings of the 48th International ACM SIGIR Conference on Research and Development in Information Retrieval}}. \bibinfo{pages}{1830--1839}.
\newblock


\bibitem[Xu et~al\mbox{.}(2025d)]%
        {xu2024mentor}
\bibfield{author}{\bibinfo{person}{Jinfeng Xu}, \bibinfo{person}{Zheyu Chen}, \bibinfo{person}{Shuo Yang}, \bibinfo{person}{Jinze Li}, \bibinfo{person}{Hewei Wang}, {and} \bibinfo{person}{Edith~CH Ngai}.} \bibinfo{year}{2025}\natexlab{d}.
\newblock \showarticletitle{Mentor: multi-level self-supervised learning for multimodal recommendation}. In \bibinfo{booktitle}{\emph{Proceedings of the AAAI Conference on Artificial Intelligence}}, Vol.~\bibinfo{volume}{39}. \bibinfo{pages}{12908--12917}.
\newblock


\bibitem[Xu et~al\mbox{.}(2025e)]%
        {xu2025nlgcl}
\bibfield{author}{\bibinfo{person}{Jinfeng Xu}, \bibinfo{person}{Zheyu Chen}, \bibinfo{person}{Shuo Yang}, \bibinfo{person}{Jinze Li}, \bibinfo{person}{Hewei Wang}, \bibinfo{person}{Wei Wang}, \bibinfo{person}{Xiping Hu}, {and} \bibinfo{person}{Edith Ngai}.} \bibinfo{year}{2025}\natexlab{e}.
\newblock \showarticletitle{NLGCL: Naturally Existing Neighbor Layers Graph Contrastive Learning for Recommendation}. In \bibinfo{booktitle}{\emph{Proceedings of the Nineteenth ACM Conference on Recommender Systems}}. \bibinfo{pages}{319--329}.
\newblock


\bibitem[Xu et~al\mbox{.}(2026b)]%
        {xu2026nlgclp}
\bibfield{author}{\bibinfo{person}{Jinfeng Xu}, \bibinfo{person}{Zheyu Chen}, \bibinfo{person}{Shuo Yang}, \bibinfo{person}{Jinze Li}, \bibinfo{person}{Hewei Wang}, \bibinfo{person}{Wei Wang}, \bibinfo{person}{Xiping Hu}, {and} \bibinfo{person}{Edith Ngai}.} \bibinfo{year}{2026}\natexlab{b}.
\newblock \showarticletitle{NLGCL+: Naturally Existing Neighbour Layers Graph Contrastive Learning with Adaptive Sample Weighting for Multimodal Recommendation}.
\newblock \bibinfo{journal}{\emph{ACM Transactions on Recommender Systems}} (\bibinfo{year}{2026}).
\newblock


\bibitem[Xu et~al\mbox{.}(2026a)]%
        {xu2026survey}
\bibfield{author}{\bibinfo{person}{Jinfeng Xu}, \bibinfo{person}{Zheyu Chen}, \bibinfo{person}{Shuo Yang}, \bibinfo{person}{Jinze Li}, \bibinfo{person}{Wei Wang}, \bibinfo{person}{Xiping Hu}, \bibinfo{person}{Steven Hoi}, {and} \bibinfo{person}{Edith Ngai}.} \bibinfo{year}{2026}\natexlab{a}.
\newblock \showarticletitle{A survey on multimodal recommender systems: Recent advances and future directions}.
\newblock \bibinfo{journal}{\emph{IEEE Transactions on Multimedia}} (\bibinfo{year}{2026}).
\newblock


\bibitem[Yu et~al\mbox{.}(2022)]%
        {yu2022graph}
\bibfield{author}{\bibinfo{person}{Junliang Yu}, \bibinfo{person}{Hongzhi Yin}, \bibinfo{person}{Xin Xia}, \bibinfo{person}{Tong Chen}, \bibinfo{person}{Lizhen Cui}, {and} \bibinfo{person}{Quoc Viet~Hung Nguyen}.} \bibinfo{year}{2022}\natexlab{}.
\newblock \showarticletitle{Are graph augmentations necessary? simple graph contrastive learning for recommendation}. In \bibinfo{booktitle}{\emph{Proceedings of the 45th international ACM SIGIR conference on research and development in information retrieval}}. \bibinfo{pages}{1294--1303}.
\newblock


\bibitem[Zhang et~al\mbox{.}(2024)]%
        {zhang2024recdcl}
\bibfield{author}{\bibinfo{person}{Dan Zhang}, \bibinfo{person}{Yangliao Geng}, \bibinfo{person}{Wenwen Gong}, \bibinfo{person}{Zhongang Qi}, \bibinfo{person}{Zhiyu Chen}, \bibinfo{person}{Xing Tang}, \bibinfo{person}{Ying Shan}, \bibinfo{person}{Yuxiao Dong}, {and} \bibinfo{person}{Jie Tang}.} \bibinfo{year}{2024}\natexlab{}.
\newblock \showarticletitle{RecDCL: Dual Contrastive Learning for Recommendation}. In \bibinfo{booktitle}{\emph{Proceedings of the ACM on Web Conference 2024}}. \bibinfo{pages}{3655--3666}.
\newblock


\bibitem[Zhao and Akoglu(2020)]%
        {zhao2020pairnorm}
\bibfield{author}{\bibinfo{person}{Lingxiao Zhao} {and} \bibinfo{person}{Leman Akoglu}.} \bibinfo{year}{2020}\natexlab{}.
\newblock \showarticletitle{PairNorm: Tackling Oversmoothing in GNNs}. In \bibinfo{booktitle}{\emph{International Conference on Learning Representations}}.
\newblock


\end{thebibliography}


\newpage

\end{document}